\documentclass[11pt]{article}

\usepackage[utf8]{inputenc}
\usepackage[T1]{fontenc}
\usepackage[margin=1in]{geometry}
\usepackage{amsmath,amssymb,amsthm}
\usepackage{graphicx}
\usepackage{booktabs}
\usepackage{enumitem}
\usepackage{microtype}
\usepackage[hidelinks]{hyperref}

\theoremstyle{plain}
\newtheorem{theorem}{Theorem}
\newtheorem{corollary}{Corollary}[theorem]
\newtheorem{proposition}[corollary]{Proposition}

\title{Capability Advertisement as a Market for Lemons:\\
A Trust Layer for Heterogeneous Agent Networks}
\author{Gaurav Naresh Mittal \\ \texttt{gaurav.n.mittal@gmail.com}}
\date{\today}

\begin{document}
\maketitle

\begin{abstract}
Large language model (LLM) agents have begun to delegate work to one another. Protocols such as the
Model Context Protocol (MCP) and the Agent2Agent protocol (A2A) let an agent publish what it can do
and let other agents call it, and public registries of such agents are already appearing. These
protocols share an assumption inherited from a friendlier era of distributed computing: that an
advertised capability is a static, truthful fact. A real agent is none of these things. Its
competence is probabilistic, it varies with the input, it drifts when the underlying model is
updated, and---because the agent is itself a language model---it can describe itself with complete
confidence and be wrong. A caller therefore sees what an agent \emph{claims} to do, not what it
\emph{can} do, and has no principled way to tell a reliable provider from a fluent impostor.

We argue that these difficulties share a single, well-understood cause: the market for lemons. When
quality is hidden and claims are cheap, good and bad providers become indistinguishable, honest
reliability goes unrewarded, and the market decays toward its worst participants. The same forces
operate in an open agent network. Read this way, the cure is also known. Economics offers three
remedies---signaling, screening, and reputation---and none of them are present in today's agent
protocols.

We make four contributions. First, we give a failure taxonomy for heterogeneous agent networks that
names \emph{confident-wrong}---a non-adversarial, correlated subclass of Byzantine faults that
classical fault-tolerance mismodels---as the characteristic failure of this setting. Second, we model
capability advertisement as a lemons market and argue that faith-based protocols admit only a
low-trust equilibrium. Third, we propose the \textbf{Trust Layer}: a thin, protocol-agnostic layer
that adds probabilistic capability descriptors, screening, and reputation as a narrow waist above MCP
and A2A, and we show analytically that it admits a separating equilibrium under a stated condition.
Fourth, we give a reliability-composition bound for delegation chains and invoke the end-to-end
argument to delimit what such a layer can and cannot promise. The design needs no model retraining
and degrades gracefully when its trust anchors are absent or corrupt.

\medskip
\noindent\textbf{Keywords:} agent protocols, capability advertisement, heterogeneous multi-agent
systems, trust and reputation, asymmetric information, LLM reliability, distributed-systems failure
models.
\end{abstract}

\section{Introduction}

Consider a planning agent assembling a contract review. It runs on a capable model, it has a budget,
and it would rather not do everything itself, so it decomposes the job and looks for help. One
subtask---\emph{summarize the indemnification clauses}---is delegated to another agent it found in a
registry. That agent advertises exactly this skill. It accepts the task, works for a few seconds, and
returns a clean, well-organized summary. The planner folds the result into the final document and
moves on.

The summary is wrong. The delegated agent runs on a smaller, cheaper model that produces fluent legal
prose without reliably tracking which party indemnifies which. Nothing in the exchange revealed this.
There was no exception, no timeout, no malformed response---only a confident answer that happened to
be false. The planner had no signal that anything had gone wrong, because the protocol gave it no
channel in which such a signal could even be expressed.

This is not an exotic edge case. As agents increasingly call agents they did not build and cannot
inspect, the exchange above becomes the common case rather than the cautionary one. Public directories
of MCP servers and A2A agents are multiplying, and the agents behind them sit on a heterogeneous mix of
model families, fine-tunes, and scaffolds whose reliability profiles differ enormously and are not
observable from the outside.

\paragraph{Why the obvious fixes do not work.}
The natural reaction is to ask the agent how good it is. But self-assessment is the problem, not the
solution: language models are poorly calibrated about their own competence, and an open network gives
every provider an incentive to describe itself generously. A second reaction is to test the agent,
which helps but is expensive to do well and says nothing about tomorrow, because a provider can
silently swap the model behind a stable advertisement, so a system can quietly degrade from one week to
the next. A third reaction is to treat a bad answer as a fault and route around it, but classical
failure models have no category for an answer that is live, responsive, protocol-compliant, and wrong
with high fluency.

\paragraph{The insight.}
Each of these difficulties is a symptom of one underlying condition. The caller observes an advertised
capability but not the true one; the two can diverge; and the provider, not the caller, knows which is
which. This is asymmetric information, and its canonical analysis is Akerlof's market for lemons
\cite{akerlof1970}. In a lemons market, buyers who cannot distinguish quality will pay only for the
average, good sellers withdraw because the average underprices them, the average falls, and the market
unravels toward its worst goods. An open agent network has the same structure: a provider who invests
in genuine reliability emits an advertisement indistinguishable from an overclaimer's, earns no premium
for the investment, and is selected against. The equilibrium is a market of lemons.

Naming the problem this way is useful because it imports a solution. The economics of asymmetric
information did not stop at the diagnosis; it produced three remedies---\textbf{signaling}
\cite{spence1973}, \textbf{screening} \cite{rothschild1976}, and \textbf{reputation} in repeated
interaction \cite{resnick2000}. Strikingly, none of these mechanisms exist in current agent protocols,
and, by the end-to-end argument \cite{saltzer1984}, none of them belong inside the wire protocol; they
belong in a thin layer above it. That layer is the subject of this paper.

\paragraph{Contributions.}
\begin{enumerate}[leftmargin=1.4em]
\item \textbf{A failure taxonomy} for heterogeneous agent networks that isolates \emph{confident-wrong}
(capability overclaim)---a non-adversarial, correlated subclass of Byzantine faults that classical
fault-tolerance mismodels---and maps each fault to an observable signal (Section~3).
\item \textbf{A market-for-lemons model} of capability advertisement, with an argument that faith-based
protocols---those that accept advertised capability at face value---admit only a low-trust, pooling
equilibrium (Sections~3 and~4.5).
\item \textbf{The Trust Layer}, a protocol-agnostic design that adds probabilistic capability
descriptors, screening through challenges and third-party attestation, and reputation with drift
detection, composed as a narrow waist above MCP and A2A. We argue it admits a separating equilibrium
when the cost of screening exceeds the gain from overclaiming, and that it degrades gracefully when its
trust anchors are weak or absent (Section~4).
\item \textbf{A reliability-composition bound} for delegation chains and an end-to-end placement
argument for what the layer should and should not promise (Section~4.6).
\end{enumerate}

The contribution is conceptual rather than experimental: we recognize an isomorphism and adapt
mechanisms with known properties. Section~6 is therefore an analysis---equilibrium arguments,
composition bounds, and grounding in published benchmarks---rather than a report of new measurements,
and we are explicit about the difference. The remainder of the paper covers background and the
cooperative-capability assumption (Section~2), the lemons formulation (Section~3), the design
(Section~4), a mapping onto existing protocols (Section~5), analysis (Section~6), related work
(Section~7), discussion (Section~8), and conclusions (Section~9).

\section{Agent Protocols and the Cooperative-Capability Assumption}

Today's agent protocols describe capability as a schema. In A2A, an agent publishes an \emph{agent
card} listing skills; in MCP, a server exposes \emph{tools}, each with a name, a description, and a
typed input/output schema; OpenAI-style function calling does much the same. These designs are good at
what they were built for: telling a caller how to \emph{invoke} a capability---what arguments to pass,
what shape of result to expect. They say nothing about whether the call will \emph{succeed}.

The omission is structural. A tool schema is a boolean assertion: the capability is present or it is
not. There is no field for how often the tool returns a correct result, no way to say that accuracy
falls off past a certain input length or outside a certain domain, no record of what evidence backs the
claim, and no expiry date after which the claim should be re-checked. A reliable agent and a confident
overclaimer therefore publish the \emph{same} advertisement. Table~\ref{tab:ads} summarizes the gap.

\begin{table}[t]
\centering
\caption{What current advertisements express ($\bullet$ present, $\circ$ absent).}
\label{tab:ads}
\begin{tabular}{lcccccc}
\toprule
Protocol & Invocation & Reliability & Context- & Provenance & Freshness & Trust \\
 & schema & & dependence & & & signal \\
\midrule
MCP tools        & $\bullet$ & $\circ$ & $\circ$ & $\circ$ & $\circ$ & $\circ$ \\
A2A agent cards  & $\bullet$ & $\circ$ & $\circ$ & $\circ$ & $\circ$ & $\circ$ \\
Function calling & $\bullet$ & $\circ$ & $\circ$ & $\circ$ & $\circ$ & $\circ$ \\
\bottomrule
\end{tabular}
\end{table}

\paragraph{System model.}
We consider an open network in which any party may publish an agent and any party may call one. An
\emph{agent} is a backend (a model, possibly fine-tuned) wrapped in a scaffold (tools, prompts, control
logic). Agents delegate tasks to one another, forming a directed \emph{delegation graph} whose depth is
not bounded in advance. A caller cannot inspect a callee's backend or scaffold; it sees only the
advertisement and the responses. We assume interactions repeat and that agent identities persist long
enough to accumulate history---assumptions we return to in Section~6.5, because the design leans on
them.

\paragraph{Why the classical failure model is not enough.}
Dependable-systems theory classifies faults as crash, omission, timing, or Byzantine
\cite{avizienis2004,lamport1982}. A confident-wrong answer is, formally, a Byzantine fault: arbitrary
incorrect output is exactly what that model admits. The trouble is that the standard Byzantine
\emph{response} fits this setting poorly. Byzantine fault tolerance assumes worst-case, adversarial
behavior and defends against it with replication that tolerates a bounded fraction of failures, on the
assumption that those failures are independent. Confident-wrong faults honor neither premise. They are
not adversarial; they are faithful reports of mistaken beliefs. And they are far from independent:
agents that share training data and prompts tend to fail on the same inputs in the same way, so a
supermajority can be confidently wrong together and voting buys little. The binary honest-or-faulty
abstraction also discards the quantity that matters most in practice, which is how likely a given answer
is to be correct. We therefore treat confident-wrong not as a fault outside the Byzantine class but as
the subclass this setting is made of---non-adversarial, correlated, and probabilistic---and one that the
machinery built for the general case mismodels. Section~3 gives it a name and a place among the other
faults.

\section{The Lemons Market in Capability Advertisement}

\subsection{An example that compounds}

Return to the contract-review planner, but let it delegate down a chain rather than to a single agent.
The planner asks Agent A to extract obligations; A delegates clause summarization to B; B calls C to
resolve cross-references. Each agent advertises the relevant skill, and each returns a confident,
plausible result. No hop reports trouble. Yet the probability that the \emph{final} answer is correct
is the product of the per-hop reliabilities, and a chain of three agents that are each right 85\% of the
time is correct only about 61\% of the time. The error is invisible at every step and severe end to end.
Heterogeneity makes this worse, not better: the agents fail in different, sometimes correlated ways, so
the planner cannot even assume that a second opinion is independent.

\subsection{The problem, stated}

Let an agent $i$ have a true reliability $r_i \in [0,1]$ on a task class---the probability that its
result is correct---possibly a function $r_i(x)$ of input features $x$ such as length or domain. Let
$a_i$ be the agent's \emph{advertised} capability. In current protocols $a_i$ is a bit (the skill is
listed or not), and crucially the caller observes $a_i$, never $r_i$. Providers may set $a_i$
strategically; nothing prevents an agent of low $r_i$ from advertising the same skill as one of high
$r_i$.

This is the lemons setup. Because the caller cannot observe $r_i$, it can offer only an average level of
trust to everyone advertising a given skill. That average underrewards a high-$r_i$ provider, who must
pay---in compute, in careful scaffolding, in evaluation---to \emph{be} reliable while earning no more
than an overclaimer who pays nothing. The high-reliability provider's rational move is to stop paying.
As reliable providers withdraw or stop investing, the average falls, and the network selects for
confident overclaimers. Absent a mechanism that ties trust to truth, the faith-based protocol has a
single stable outcome, and it is the bad one.

\subsection{A failure taxonomy}

To reason about mechanisms we first need vocabulary for what can go wrong. We distinguish:
\begin{itemize}[leftmargin=1.4em]
\item \textbf{Crash}---the agent does not respond.
\item \textbf{Timeout / omission}---the agent is too slow or drops the request.
\item \textbf{Refusal}---the agent declines (policy, safety, or capability awareness). \emph{Honest and
useful.}
\item \textbf{Partial / degraded}---the agent returns an incomplete or low-quality result and, ideally,
says so.
\item \textbf{Confident-wrong (capability overclaim)}---the agent returns a fluent, well-formed,
incorrect result and signals no doubt. \emph{This is the fault current protocols cannot see.}
\item \textbf{Drift}---the agent's true reliability changes (typically a backend update) while its
advertisement does not.
\item \textbf{Misroute}---the task reaches an agent whose advertised skill does not match the real need.
\end{itemize}

The first two are observable directly. Refusal and partial results are observable when the agent is
built to report them. Confident-wrong is the hard case: it leaves no native signal and surfaces only
through screening, a judge, or a post-condition check, which is why it is the fault a trust mechanism
must be designed around.

This taxonomy scopes the rest of the paper. Crash, timeout, and omission are already handled by ordinary
timeouts and retries; refusal and partial results help the caller whenever agents report them honestly.
What current protocols cannot see---confident-wrong, the drift that produces it silently, and the
misrouting that boolean skill claims invite---is what the Trust Layer of Section~4 is built for:
descriptors and screening address confident-wrong and misrouting, while freshness and drift detection
address drift.

\section{A Trust Layer for Heterogeneous Delegation}

\subsection{Overview}

Our design adds a thin layer between a caller and the transport protocol it uses to reach a callee. It
does not replace MCP or A2A; it sits above them as a \emph{narrow waist}, in the sense of the Internet
hourglass \cite{clark1988}: a minimal common interface that many backends below and many planners above
can share. The layer has three parts, each corresponding to one classical remedy for asymmetric
information.
\begin{itemize}[leftmargin=1.4em]
\item \textbf{Probabilistic capability descriptors} let a good provider \emph{signal} its quality
credibly.
\item \textbf{Screening} lets a caller \emph{separate} providers by quality before trusting them.
\item \textbf{Reputation with drift detection} lets the network \emph{remember}, so that today's claim
is disciplined by yesterday's results and re-checked when the world changes.
\end{itemize}

Four principles guide the design: it should be agnostic to the model and to the transport; it should be
deployable incrementally, working alongside agents that ignore it; and it should respect the end-to-end
argument by promising only what a protocol layer can honestly promise.

\subsection{Probabilistic capability descriptors (signaling)}

The descriptor replaces the boolean claim with a structured one. Instead of ``I summarize
indemnification clauses,'' an agent publishes that it does so with reliability approximately 0.91,
calibrated on a named benchmark of a stated size and date, with accuracy that falls past roughly eight
thousand tokens, produced by backend version v3.2, valid for thirty days. It is, in effect, a nutrition
label for competence.

The point is not merely richer metadata; it is \emph{credible} metadata. A bare number is cheap to
inflate, so the descriptor's load-bearing field is \textbf{provenance}: the evidence behind the
claim---which evaluation, when, on how many samples. Provenance is what turns the descriptor from cheap
talk into a signal in Spence's sense \cite{spence1973}, because a claim that points to reproducible
evidence is one an overclaimer cannot cheaply imitate. We considered and rejected two extremes: the bare
boolean schema of today, which carries no signal, and heavyweight semantic-service ontologies of the
2000s (OWL-S and kin), which demanded so much formality that they were never adopted at scale. The
descriptor aims for the smallest schema that can carry a credible signal.

\subsection{Screening: challenges and attestation}

Signals help, but a careful caller will also \emph{test}. Screening \cite{rothschild1976} is the
caller-side complement to provider-side signaling, and it takes two forms here. The first is a
\textbf{challenge}, or canary: before entrusting an agent with the real contract, the caller sends a
task whose answer it already knows and checks the response. The second is \textbf{attestation}: an
independent party evaluates the agent and issues a signed statement of the result, which the agent
presents like a credential. Attestation matters because challenges are expensive to design well and
wasteful to repeat at every caller; a shared attestation lets one party's careful evaluation serve many.

What screening adds beyond information is \emph{cost asymmetry}. Passing a real challenge, or earning a
genuine attestation, costs more for an agent that lacks the capability than for one that has it. That
cost gap is what makes the two types separate rather than pool---the formal point we take up in
Section~4.5.

\subsubsection{The attestation trust model}

Attestation invites an obvious and fatal-looking objection: does it not just relocate the trust problem
to the attester? If the caller must trust a capability authority, how does anyone know the authority is
neutral and has not been bought? Web public-key infrastructure shows the danger is real---certificate
authorities have been compromised, and the browser trust-root model failed quietly when they were. We
take the objection seriously, and our answer has three moves.

\emph{First, prefer verifiability over authority.} An attestation should not say ``trust me, this agent
is good.'' It should say ``I ran this public benchmark with this seed; here is the signed, reproducible
transcript.'' That turns the attester from a judge into a notary whose work can be re-run or
spot-checked. A notary that fabricates results is caught by anyone who reproduces them. The precedent is
Certificate Transparency \cite{laurie2013}, which addressed untrustworthy certificate authorities not by
making them honest but by logging their issuances publicly so that misbehavior becomes detectable after
the fact. Attestations live in the same kind of append-only log.

\emph{Second, treat attestation as an optimization, not a foundation.} It is one of three screening
tools, beside the agent's own signed descriptor and the caller's own challenges. Its only job is to
amortize evaluation cost across many callers. A caller who trusts no attester does not lose the ability
to delegate safely; it falls back to screening agents itself and learning from its own results---slower
and costlier, but intact. A corrupt attester therefore does not break the network; it removes an
efficiency and reverts the caller to the no-attester baseline. The system degrades gracefully rather
than catastrophically, which is a far stronger guarantee than ``the authority is honest.''

\emph{Third, make corruption costly and self-correcting.} An attester can post a bond against its
attestations, to be slashed to a harmed caller if an audit re-run shows inflation; callers can run
several attesters in parallel and weight them; and, decisively, callers compare an attester's
predictions against their own downstream outcomes and discount attesters that systematically overpredict.
This recursion bottoms out: an attester is judged, in the end, by real task results the caller observes
directly, not by another authority.

We do not claim this makes attestation incorruptible. Collusion between a provider and an attester still
pays in the short run, before the caller has gathered enough of its own outcomes to notice; re-running
evaluations costs money and a transparency log catches fabrication but not subtle bias in test selection;
and some capabilities are not cheaply reproducible, in which case attestation reverts toward bare
authority and the trust problem genuinely returns. We treat these as scope boundaries (Section~6.5)
rather than as solved.

\subsection{Reputation and drift (the repeated game)}

The third remedy is memory. Every real outcome---corroborated by a downstream check, a challenge, or a
later contradiction---updates the agent's reputation, and reputation propagates across the delegation
graph in the spirit of EigenTrust \cite{kamvar2003}, so that an agent vouched for by trusted agents
inherits some standing. Memory converts a one-shot bluff into a losing long-run strategy, the familiar
discipline of repeated games \cite{resnick2000}.

Two features are specific to language-model agents. \textbf{Freshness}: descriptors and attestations
carry a time-to-live, after which they are discounted or re-checked, so that stale claims do not accrue
unearned trust. \textbf{Drift detection}: when a provider changes the backend behind a stable
advertisement, the layer treats the version change as an event that expires prior evidence and forces
re-validation---directly addressing the ``it got worse overnight'' failure that no current protocol can
notice. Optionally, an agent may \emph{stake} reputation or budget on a claim, to be slashed on a
confirmed confident-wrong result, sharpening the incentive to advertise honestly.

\subsection{Why the layer changes the equilibrium}

The mechanisms are individually familiar; the claim worth making is about their joint effect. We sketch
it informally here; Appendix~A gives the formal model, the two theorems, and their proofs.

Under a faith-based protocol, advertised capability is free and unverified---a cheap-talk setting
\cite{crawford1982}---so a low-reliability provider can mimic a high-reliability one at no cost. No
advertisement separates the types, the caller can do no better than treat them alike, and investment in
reliability earns no return. The only stable outcome is the pooling equilibrium of Section~3.2---the
lemons market.

Now add the layer. Signaling attaches a verifiable cost to a credible claim; screening attaches a cost
to \emph{passing} as capable; reputation attaches a future cost to being caught wrong. Write $g$ for the
one-shot gain an agent expects from overclaiming and $c$ for the combined cost it must bear to sustain
the overclaim through screening and across repeated interactions. When $c > g$, overclaiming is no
longer profitable, honest advertising becomes a best response, and a \emph{separating} equilibrium
exists in which high- and low-reliability providers send distinguishable signals and callers route
accordingly. The design's job is to push $c$ above $g$---cheaply enough that honest providers will bear
it, steeply enough that overclaimers will not. The condition also delimits the design's reach: where
credible challenges cannot be built or identities do not persist, $c$ cannot be raised, and the lemons
problem in that corner may be genuinely unsolvable at the protocol layer.

\subsection{Composing reliability, and what the layer should promise}

Because tasks flow down chains, a caller needs to reason not about one hop but about the whole path. If
hop $k$ succeeds with probability $r_k$ and failures are independent, end-to-end reliability is
$\prod_k r_k$, the compounding effect of Section~3.1; where the layer adds a verification step of
catch-rate $v_k$ at a hop, the residual undetected-error rate falls accordingly, and a caller can bound
end-to-end reliability rather than merely hope for it. Independence is an idealization---heterogeneous
agents share training data and failure modes---so the bound should be read as a planning tool whose
assumptions are stated, not as a guarantee.

This is also where the end-to-end argument \cite{saltzer1984} disciplines our ambitions. The layer can
carry signals, run screening, and keep reputation; it can \emph{bound and price} reliability and route to
improve it. It cannot \emph{ensure} that a particular answer is semantically correct, because only the
endpoints---ultimately the application and its human owners---hold the ground truth. A protocol that
promised more would itself be overclaiming. We therefore design the layer to make correctness checkable
and accountable rather than guaranteed.

\section{Realizing the Trust Layer over MCP and A2A}

The design is built to fit the protocols that already exist. A descriptor maps onto an A2A agent card as
additional fields beside each advertised skill, and onto an MCP tool as annotations beside the tool
schema. The added fields are optional and additive: an agent that ignores them still interoperates, and
a caller that finds them absent falls back to a default, cautious trust level and to its own screening.
A small gateway can sit in front of a heterogeneous fleet, speak each transport, and normalize claims
into the common descriptor form---the narrow waist in practice. Attestations and reputation records are
published to an append-only log addressed by agent identity, so that any caller can fetch an agent's
history without a central clearinghouse mediating each lookup. None of this requires changing a model; it
changes only what travels alongside the call. We leave a reference schema and validator to future work.

\section{Analysis}

We report no measurements of a live system, and we will not dress the analysis up as one. What we offer
is threefold: the equilibrium results of Appendix~A; an \emph{illustrative} agent-based simulation that
visualizes those results (Section~6.3); and grounding in reliability measurements others have already
published. The simulation is a closed world---no language model is queried, every provider's reliability
is a stipulated parameter, and the correctness oracle is assumed---so it can show only that the mechanism
behaves as the proofs say, never that the magnitudes match any real deployment. Building and measuring
such a deployment is the natural next step and is out of scope here.

\subsection{Setup}

The arguments below assume the system model of Section~2: an open network, repeated interactions,
persistent identities, and the existence of at least some tasks whose answers can be checked. Where those
assumptions fail, the conclusions weaken in ways we flag.

\subsection{Research questions}

\begin{itemize}[leftmargin=1.4em]
\item \textbf{RQ1.} Does a faith-based protocol collapse to a low-trust equilibrium? We argue yes
(Section~3.2, Theorem~1): with free, unverified advertisement, no signal separates the types and pooling
is the only stable outcome. The simulation (Figure~\ref{fig:collapse}) shows the collapse directly.
\item \textbf{RQ2.} Does the Trust Layer admit a separating equilibrium, and at what cost? We argue yes
when $c > g$ (Theorem~2): when the cost of sustaining an overclaim exceeds its gain, honest advertising
is a best response. The simulation locates the transition precisely at $c = g$
(Figure~\ref{fig:threshold}).
\item \textbf{RQ3.} How does reliability scale with chain depth, with and without the layer? The
composition of Section~4.6 makes the dependence explicit---naive chains decay multiplicatively, while a
verification step at each hop curbs the decay---so the layer's value grows with depth
(Figure~\ref{fig:depth}).
\item \textbf{RQ4.} Is confident-wrong prevalent enough to matter? Published benchmarks say yes. Holistic
evaluations report wide reliability spreads across model families \cite{liang2022}, agent benchmarks show
large gaps on multi-step tasks \cite{liu2023}, and tool-use studies document frequent, confidently-made
errors \cite{qin2023}. The fault we target is the common case, not a corner case.
\item \textbf{RQ5.} What does screening cost? In the worst case, one extra challenge call per new
provider-task pairing; attestation amortizes this across callers, and reputation lets a caller skip
screening for providers it already trusts, so steady-state overhead falls well below the first-encounter
cost.
\end{itemize}

\subsection{An illustrative simulation}

To make the equilibrium results tangible we built a small agent-based simulation (a few hundred lines of
Python, available at \url{https://github.com/grvnmttl/lemons-market-sim}). A population of providers each
choose whether to invest in genuine reliability---at an idiosyncratic cost $\kappa_i$---and what to
advertise, then revise by myopic best response over many rounds, under two regimes (faith-based and Trust
Layer), averaged over 24 random seeds. The parameters are stylized: low and high true reliabilities
$r_L = 0.55$ and $r_H = 0.92$, so the overclaim gain is $g = 0.37$. As stressed in the setup, this
\emph{illustrates} Appendix~A; it does not test the design against reality.

Figure~\ref{fig:collapse} shows the central contrast (RQ1, RQ2). Both regimes begin from a healthy market
in which half the providers have invested. Under faith-based advertising the market unravels within a few
rounds: realized reliability falls to $r_L$ and the share investing in genuine reliability collapses to
near zero---the lemons outcome of Theorem~1 and Corollary~1.1. Under the Trust Layer with $c = 1.5\,g$,
reliability holds near $r_H$ and the invested share settles at $0.58$, essentially the predicted
$P(\kappa < g - c_H) = 0.583$ of Corollary~2.1.

\begin{figure}[t]
\centering
\includegraphics[width=\linewidth]{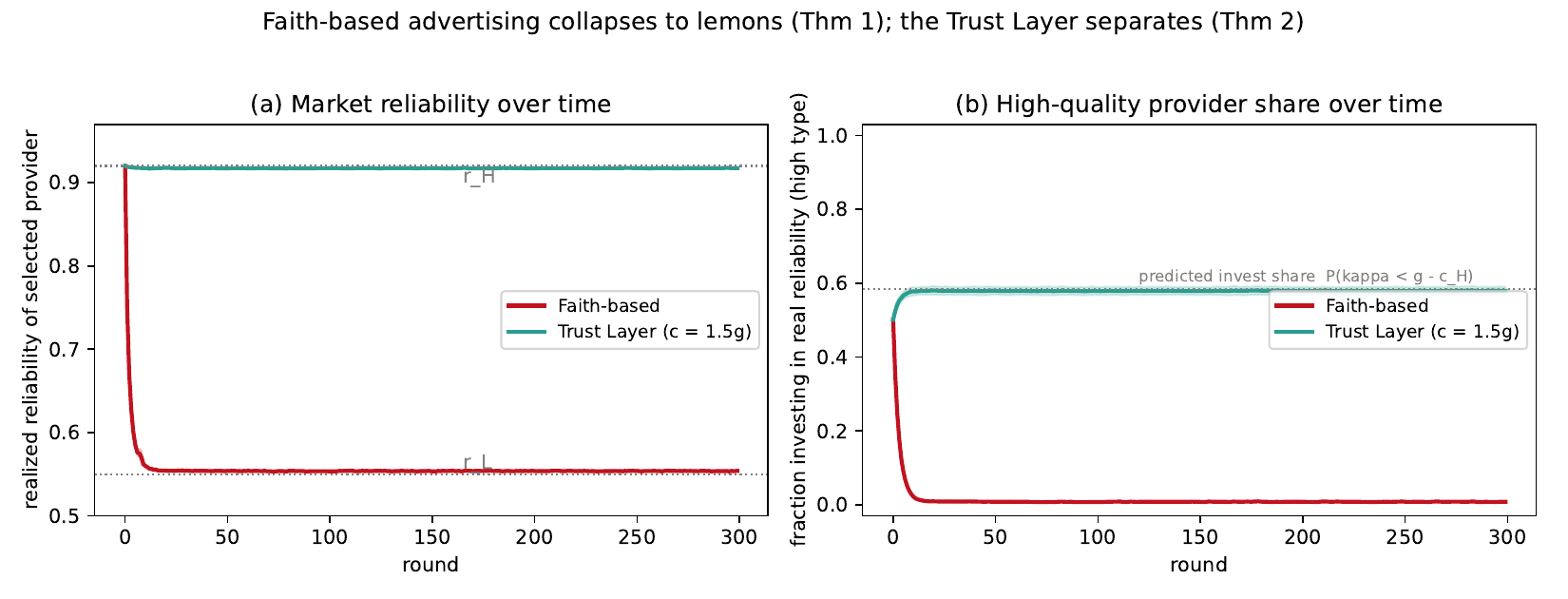}
\caption{Faith-based advertising versus the Trust Layer over time (mean of 24 seeds; bands are $\pm 1$
s.d.). \textbf{(a)} The realized reliability obtained by a caller collapses to $r_L$ under faith-based
advertising but holds near $r_H$ under the layer. \textbf{(b)} The share of providers investing in
genuine reliability falls to near zero under faith-based advertising (Corollary~1.1) and settles at the
predicted $P(\kappa < g - c_H)$ under the layer (Corollary~2.1).}
\label{fig:collapse}
\end{figure}

Figure~\ref{fig:threshold} sweeps the screening cost (RQ2). As $c$ crosses $g$ the steady state flips
sharply: overclaiming falls from a majority of providers to near zero, and market reliability jumps from
about $0.65$ to $r_H$. The transition sits exactly at $c = g$, as Theorem~2 requires.

\begin{figure}[t]
\centering
\includegraphics[width=0.72\linewidth]{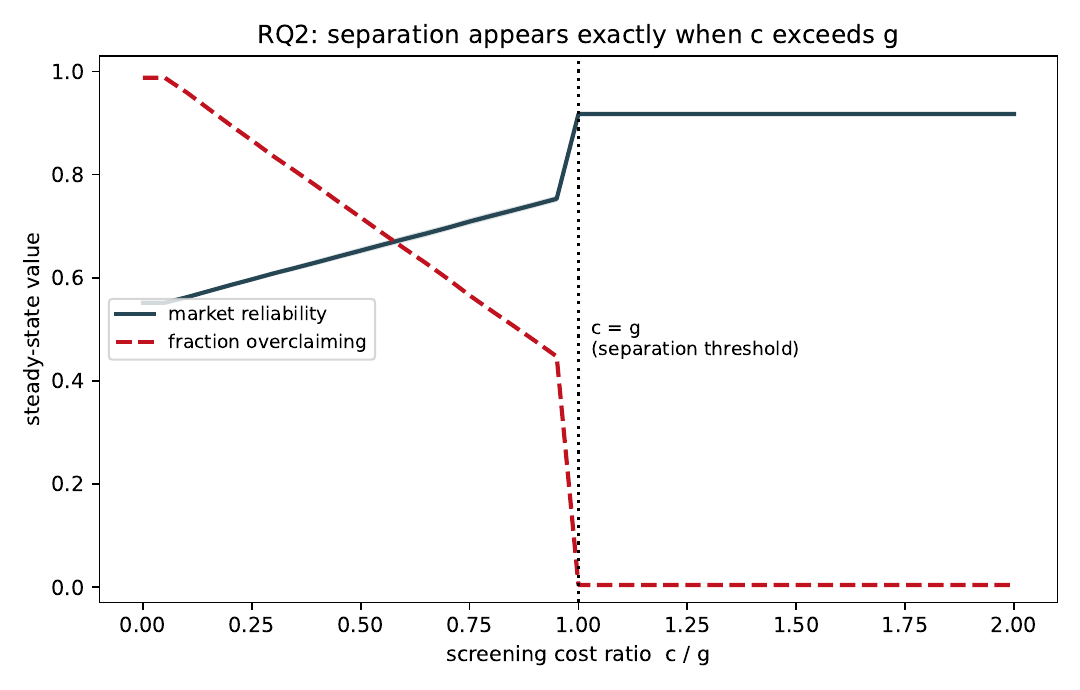}
\caption{Steady-state outcomes as the screening cost $c$ varies relative to the overclaim gain $g$.
Separation appears precisely at $c = g$ (Theorem~2): overclaiming collapses and market reliability jumps
to $r_H$.}
\label{fig:threshold}
\end{figure}

Figure~\ref{fig:depth} composes per-hop reliability into delegation chains (RQ3). Because reliability
multiplies along a chain, the faith-based market---reliable only about $0.55$ per hop---is correct barely
5\% of the time at depth five, whereas the Trust Layer with per-hop verification stays above $0.8$. The
layer's advantage widens with depth, which is the direction real agent networks are heading.

\begin{figure}[t]
\centering
\includegraphics[width=0.72\linewidth]{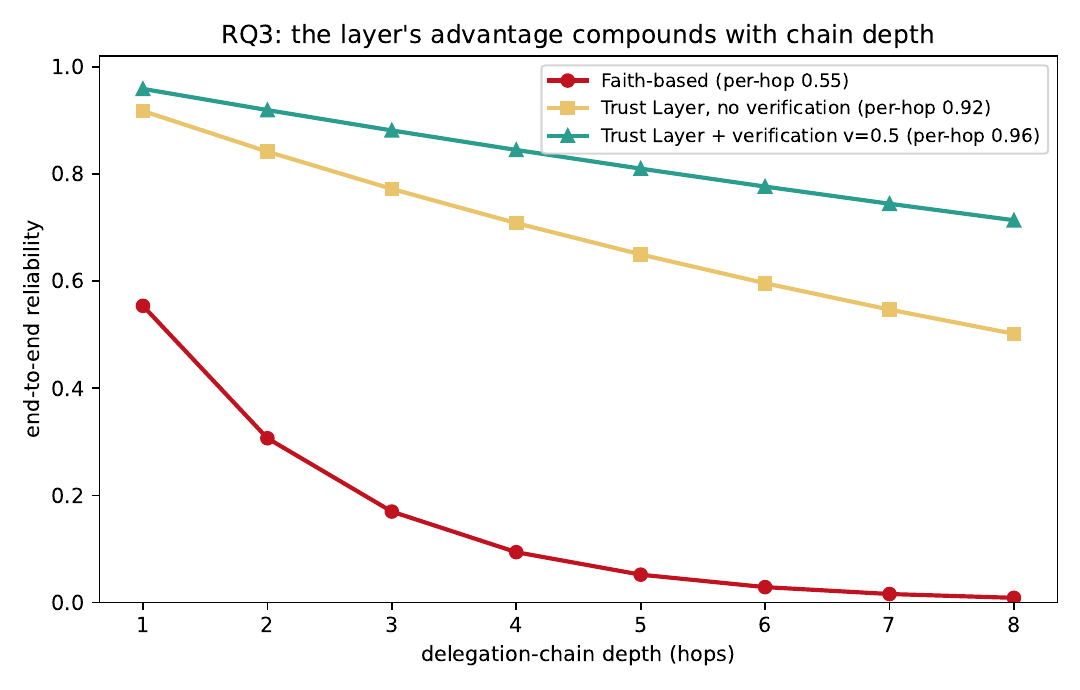}
\caption{End-to-end reliability versus delegation-chain depth, using each regime's steady-state per-hop
reliability. Multiplicative decay punishes the faith-based market severely; the Trust Layer with per-hop
verification (catch-rate $v = 0.5$) degrades far more gently.}
\label{fig:depth}
\end{figure}

\subsection{Ablation}

Remove descriptors and providers cannot signal, so screening must carry the entire burden and the
per-call cost rises. Remove screening and signals revert to cheap talk, collapsing toward pooling. Remove
reputation and each interaction stands alone, so drift goes unnoticed and a provider can overclaim afresh
against every new caller. Each part earns its place by guarding a failure the others do not.

\subsection{Limitations and failure modes}

Several boundaries deserve to be stated plainly. \textbf{Calibration bootstrapping}: a provider's
self-reported reliability is only as good as the evaluation behind it, and a provider cannot fully
self-assess; the design leans on external evidence precisely because of this, but cold-start cases have
little evidence to lean on. \textbf{Attestation under collusion}: as Section~4.3.1 concedes, a captured
attester pays off in the short run, and verifiability catches fabrication but not biased test selection.
\textbf{Correlated errors}: heterogeneous agents are not independent, so cross-checking and voting are
weaker than the arithmetic suggests, and the composition bound should be read with that caveat.
\textbf{Reputation gaming and fairness}: Sybil identities, collusive vouching, and incumbency advantage
are real, and a reputation system can entrench early movers; mitigations exist but none is complete.
\textbf{Non-reproducible capabilities}: where outputs are stochastic or test sets are private, screening
and attestation lose force and the trust problem returns in full. \textbf{Closed-world simulation}: the
Section~6.3 simulation illustrates only the model's own dynamics; its magnitudes are artifacts of
stipulated parameters and carry no empirical weight about real systems.

\section{Related Work}

\paragraph{Agent protocols.}
MCP \cite{anthropic2024} and A2A \cite{google2025}, along with function-calling interfaces and frameworks
such as AutoGen and LangGraph, define how agents advertise and invoke capabilities. They are the substrate
this work builds on, and the capability-and-trust gap they leave is the gap we fill; we add a layer above
them rather than competing with them.

\paragraph{Trust and verification for LLM agents.}
Trust in LLM-agent systems has become an active research area. Surveys now catalog threats and
countermeasures across single- and multi-agent settings \cite{yu2025}, and several systems attach
per-agent credibility or reputation scores so that agents can discount unreliable peers \cite{ebrahimi2025}.
Security analyses of MCP document \emph{semi-honest} servers that obey the wire protocol while behaving
differently from what they advertise \cite{hou2025}---the adversarial counterpart of the confident-wrong
fault we study. The verification tools these systems lean on, LLM-as-judge foremost among them, are
themselves of uneven reliability \cite{gu2024}. We differ from this work in what we provide more than in
what we observe. Where it treats untrustworthiness as a security threat, or supplies a trust score whose
guarantees stay implicit, we give an economic account: capability advertisement is a market for lemons,
which is why the problem is structural rather than incidental, and why a mechanism helps only once
screening costs more than overclaiming gains (Section~4.5). Our overclaiming is also non-adversarial---a
faithful report of a wrong self-belief---so it survives among the cooperative agents that a security
framing assumes away. And we site the remedy in a protocol-layer narrow waist above MCP and A2A, not
inside one framework.

\paragraph{Classical multi-agent systems.}
The problem of one agent delegating to another is old. The Contract Net Protocol \cite{smith1980}
formalized task announcement and bidding; FIPA standardized agent communication languages; the
belief--desire--intention model \cite{rao1995} and platforms like JADE \cite{bellifemine2007} built on
these. That tradition largely assumed cooperative agents whose advertised competence was truthful and
stable. Our departure is to drop that assumption: when competence is probabilistic and self-reported by a
language model, the classical machinery has no account of overclaiming, and that is precisely what the
Trust Layer supplies.

\paragraph{Trust and reputation.}
Reputation systems for online marketplaces and peer-to-peer networks \cite{resnick2000,kamvar2003} inform
our reputation sublayer. We build on them, adding freshness and drift detection for the specific problem
of a backend that changes beneath a stable advertisement.

\paragraph{Economics of asymmetric information.}
The market for lemons \cite{akerlof1970} is our framing, and signaling \cite{spence1973} and screening
\cite{rothschild1976} are the remedies we operationalize. The contribution is to recognize that capability
advertisement \emph{is} a lemons market and that these remedies map onto concrete protocol mechanisms.

\paragraph{Distributed systems.}
We draw on Byzantine fault tolerance \cite{lamport1982} and the dependability taxonomy \cite{avizienis2004}
to position confident-wrong as a distinct fault, on the end-to-end argument \cite{saltzer1984} to decide
what the layer should promise, and on the narrow-waist design of the Internet \cite{clark1988} for the
deployment story.

\paragraph{LLM evaluation.}
Holistic and agentic benchmarks \cite{liang2022,liu2023,qin2023} supply the empirical grounding that
confident-wrong is prevalent, and model cards \cite{mitchell2019} are the documentation lineage our
descriptors extend toward machine-readable, probabilistic claims.

\section{Discussion}

If capability advertisements become credible, a market can form around them. Reliability could be priced,
attesters could compete on the rigor of their evaluations, and agents could be selected on demonstrated
competence rather than on fluent self-description. That is the optimistic reading, and it is worth
pursuing, but it carries its own hazards: attestation can centralize, reputation can entrench incumbents,
and a market that prices reliability can also be gamed by those who learn to look reliable.

The framing generalizes beyond language models. Any open network of services with hidden quality and cheap
claims has a lemons problem, and the same triad---signal, screen, remember---applies. The specifics that
make our setting distinctive are that quality is probabilistic and context-dependent rather than binary,
and that it drifts on a timescale short enough to matter.

Several problems remain open. Cross-checking under correlated, heterogeneous errors is genuinely unsolved
and may deserve its own treatment. Decentralizing attestation without recreating a trusted root is hard.
And bootstrapping calibration where no evaluation history exists is a chicken-and-egg problem the design
mitigates but does not dissolve. We see these as the agenda this framing opens, not as gaps that undermine
it.

\section{Conclusion}

Today's agent protocols assume that an agent's advertised capability is true, static, and known.
Language-model agents violate all three, and the failures they produce have a single economic root: a
market for lemons, in which honest reliability cannot distinguish itself and the network drifts toward
confident overclaimers. Seen this way, the remedies are the ones economics has long known---signaling,
screening, and reputation---none of which today's protocols provide.

We have proposed the Trust Layer, a thin and protocol-agnostic narrow waist that supplies exactly these
three mechanisms above MCP and A2A, and we have argued that it turns a pooling equilibrium of lemons into
a separating equilibrium in which reliability is provable and rewarded, whenever the cost of sustaining an
overclaim can be pushed above its gain. The design needs no retraining, deploys incrementally, and---because
its attestation anchor is an optimization rather than a foundation---degrades gracefully when trust is
scarce. The most pressing next step is to build it and measure it: to learn, in a live heterogeneous
network, how high that cost must be set, and where the corners are in which the lemons cannot be cleared at
all.

\appendix

\section{Equilibrium Analysis}

This appendix formalizes the claim of Section~4.5: a faith-based protocol admits only a pooling
(low-trust) equilibrium, whereas the Trust Layer's screening admits a separating equilibrium exactly when
the cost of sustaining an overclaim exceeds the gain from it. The model is a standard signaling game in the
lineage of Akerlof, Spence, and Stiglitz. What is new is the mapping: casting capability advertisement as a
lemons market turns a vague worry about untrustworthy agents into two precise claims---that trust collapses
without a mechanism (Theorem~1), and that a mechanism restores it once screening costs more than
overclaiming gains (Theorem~2).

\subsection{Model}

There is a population of \emph{providers} and a population of \emph{callers}. Each provider has a hidden
\textbf{type} $\theta \in \{L, H\}$ with true task-success probabilities $r_L < r_H$ in $[0,1]$. A fraction
$\lambda \in (0,1)$ of providers are type $H$; write the prior-mean reliability as
\[
\rho_{\text{pool}} \;=\; \lambda\, r_H + (1-\lambda)\, r_L .
\]
A provider observes its own type and publishes an \textbf{advertisement} (a message) $m \in M$. A caller
observes $m$---and, where a screening technology exists, a verifiable credential---forms a posterior belief,
and decides whom to delegate to and what to pay.

We summarize the caller side by a single \textbf{trust price} $B(\rho)$: the benefit a provider receives
when callers perceive its success probability to be $\rho$ (more work routed to it, a higher fee, or both).
We assume:
\begin{itemize}[leftmargin=1.4em]
\item \textbf{(A1) Monotone trust price.} $B$ is strictly increasing in $\rho$. Write $B_H = B(r_H)$,
$B_L = B(r_L)$, $B_{\text{pool}} = B(\rho_{\text{pool}})$, so $B_L < B_{\text{pool}} < B_H$.
\item \textbf{(A2) Two types.} $r_L < r_H$. (The continuum case is Remark~A.5.3.)
\item \textbf{(A3) Cheap talk under faith-based advertising.} Messages are costless and type-independent:
any type can send any $m \in M$ at zero cost, and there is no verification.
\item \textbf{(A4) Single-crossing screening.} A screening technology lets a provider obtain a \emph{high
credential}. Obtaining it costs $c_H \ge 0$ for a true $H$ and $c_H + c$ for a true $L$, with $c > 0$. That
is, faking competence on a genuine challenge is strictly costlier than demonstrating it. (This is the
Spence--Mirrlees single-crossing property \cite{spence1973,mirrlees1971}; it presumes verifiable task
instances exist.)
\item \textbf{(A5) Rational play.} Providers are risk-neutral and maximize trust price minus cost; callers
are Bayesian and the trust price they pay is increasing in their posterior (by A1).
\end{itemize}

Define the \textbf{recognition premium}, equivalently the one-shot \textbf{overclaim gain},
\[
g \;:=\; B_H - B_L \;>\; 0 .
\]
This is what a type $L$ would capture by passing itself off as $H$.

\subsection{Equilibrium concepts}

The solution concept is \textbf{Perfect Bayesian Equilibrium (PBE)}: a strategy profile and a belief system
such that strategies are sequentially rational given beliefs, and beliefs are derived from strategies by
Bayes' rule on the equilibrium path. An equilibrium is \textbf{separating} if distinct types induce distinct
posteriors (the caller can tell them apart) and \textbf{pooling} if all participating types induce the same
posterior (the caller cannot). Where multiple PBE exist we invoke the \textbf{Cho--Kreps Intuitive
Criterion} \cite{cho1987} as a refinement (Remark~A.5.1).

\subsection{Faith-based advertising collapses to pooling}

\begin{theorem}
Under faith-based advertising (A3), no separating PBE exists. Every PBE is outcome-equivalent to a pooling
equilibrium in which advertising is uninformative and the perceived reliability of every participating
provider equals the population mean $\rho_{\text{pool}}$.
\end{theorem}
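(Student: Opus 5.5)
The plan is a standard cheap-talk argument by contradiction, followed by a characterization of what every PBE looks like.

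\emph{Step 1: no separating PBE.} Suppose a separating PBE exists. Then the two types use message sets $M_H$ and $M_L$ that are disjoint up to measure zero, and Bayes' rule on the equilibrium path gives posterior $r_H$ after any $m \in M_H$ and $r_L$ after any $m \in M_L$. A type-$L$ provider who sends $m \in M_L$ earns $B_L$. Because messages are costless and available to every type (A3), it can deviate to some $m' \in M_H$ and earn $B_H$. By (A1), $B_H > B_L$, so the deviation gains $g > 0$. This contradicts sequential rationality for type $L$, so no separating PBE exists. Beliefs off the path play no role here, since $m'$ is on the path.

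\emph{Step 2: every PBE is outcome-equivalent to pooling.} Take any PBE and let $\mu(m)$ be the posterior after an on-path message $m$. Since payoffs are $B(\mu(m))$ with zero cost, each type sends only messages that maximize $B(\mu(\cdot))$ over the messages available to it. Both types have the same available set and the same payoff function of the message, so they share the same maximal value $B^\ast$. Strict monotonicity of $B$ (A1) then implies every on-path message of either type induces the same posterior $\mu^\ast$. Mixed strategies and semi-separating profiles are covered by this argument too: any on-path message inducing a posterior below $\mu^\ast$ would be abandoned.

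\emph{Step 3: identify $\mu^\ast$.} Apply the law of iterated expectations to the on-path posteriors. Their average, weighted by the equilibrium message distribution, equals the prior mean $\rho_{\text{pool}}$. Since every on-path posterior equals $\mu^\ast$, we get $\mu^\ast = \rho_{\text{pool}}$. Advertising is therefore uninformative, because the posterior equals the prior on every message. The outcome is the same as in the babbling equilibrium in which both types send a single fixed message.

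To finish, I will check existence. The babbling profile, with off-path beliefs set to $r_L$ (or to any belief at most $\rho_{\text{pool}}$), is a PBE. This ensures the characterization is not vacuous.

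\emph{Main obstacle.} The delicate step is Step 2 in its full generality: handling an arbitrary, possibly infinite message space with mixed strategies. The key idea is to compare equilibrium payoffs directly rather than individual messages. I also have to interpret ``participating provider'' carefully. If no outside option is modeled, everyone participates. If exit is allowed with a reservation value, the same argument applies to the set of participants, and $\rho_{\text{pool}}$ should be read as the mean over that set. I would state that reading explicitly.
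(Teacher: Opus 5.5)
Your proposal is correct and takes essentially the same route as the paper: a costless deviation to a higher-posterior on-path message rules out any informative PBE, and Bayes consistency then pins the common posterior at $\rho_{\text{pool}}$. Your version adds some care the paper omits. You compare equilibrium payoffs over the whole message space instead of picking a highest-posterior message, you make the iterated-expectations step explicit, and you check that the babbling profile is actually a PBE. These are refinements of the paper's argument, not a different argument.
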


\begin{proof}
\emph{(No separation.)} Suppose, for contradiction, an informative PBE in which two messages induce
different posteriors. Let $\bar m$ be an on-path message inducing the highest posterior reliability
$\bar\rho$, and let $\underline m$ be an on-path message inducing a strictly lower posterior
$\underline\rho < \bar\rho$, sent by some type $t$. Beliefs attach to the \emph{message}, not to the
sender's identity, so any provider sending $\bar m$ is perceived as $\bar\rho$. Consider type $t$: its
equilibrium payoff is $B(\underline\rho)$, while deviating to $\bar m$ yields $B(\bar\rho) - 0 >
B(\underline\rho)$ by (A1) and the costlessness of messages (A3). The deviation is strictly profitable,
contradicting equilibrium. Hence all participating types induce a single posterior. \emph{(Value of the
pool.)} With one on-path posterior and all types participating, Bayes' rule gives $P(H \mid m) = \lambda$,
so the perceived reliability is $\lambda r_H + (1-\lambda) r_L = \rho_{\text{pool}}$.
\end{proof}

The economic content is that without verification a low type can always mimic a high type's words at no
cost, so words carry no information and trust defaults to the average---the market for lemons. Theorem~1 is
the costless-message (cheap-talk) special case in the lineage of Crawford and Sobel \cite{crawford1982}.

\begin{corollary}[Investment collapse]
Add a prior stage in which a provider may pay $\kappa > 0$ to become type $H$ (otherwise it is $L$), with
reliability rewarded only through perception. Under faith-based advertising, no provider invests; the
equilibrium fraction of high types is $\lambda = 0$ and realized reliability converges to $r_L$.
\end{corollary}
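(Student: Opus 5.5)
The argument is backward induction over the two stages: first pin down the advertising-stage payoffs using Theorem~1, then show that investing is strictly dominated at the investment stage.

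\textbf{Step 1 (advertising stage).} Fix any investment profile that yields a fraction $\lambda' \in [0,1]$ of type $H$. The subgame is exactly the faith-based game of the theorem with prior $\lambda'$. By Theorem~1, every PBE of this subgame is outcome-equivalent to pooling. So every participating provider, whatever its type, receives the same trust price $B(\rho_{\text{pool}}(\lambda'))$ with $\rho_{\text{pool}}(\lambda') = \lambda' r_H + (1-\lambda') r_L$. If $\lambda' \in \{0,1\}$, separation is vacuous and the uniform price still holds trivially. The key fact is that a provider's continuation payoff does not depend on its own type.

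\textbf{Step 2 (investment stage).} Consider one provider's choice, holding the others' choices fixed. Two cases arise.

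\emph{Continuum of providers.} A single deviation does not move $\lambda'$. The payoff from investing is $B(\rho_{\text{pool}}) - \kappa$ and from not investing is $B(\rho_{\text{pool}})$. Since $\kappa>0$, not investing strictly dominates.

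\emph{Finite population.} One provider's investment does shift $\lambda'$. This is the step I expect to be the main obstacle, because the investor then raises the pooled price that everyone shares, and the effect need not be zero. I would handle it in one of two ways:
\begin{itemize}
\item adopt the standard large-market (price-taking) convention, under which each provider treats $\rho_{\text{pool}}$ as given; or
\item require that $\kappa$ exceed the marginal gain $B(\rho_{\text{pool}}(\lambda'+1/N)) - B(\rho_{\text{pool}}(\lambda'))$, which vanishes as $N\to\infty$.
\end{itemize}
I will state the continuum/price-taking assumption explicitly, matching the phrase ``reliability rewarded only through perception.''

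\textbf{Step 3 (conclusion).} Because no provider invests in equilibrium, the equilibrium fraction of type $H$ is $\lambda = 0$ (absent any exogenous endowment of $H$ types). The perceived reliability is then $\rho_{\text{pool}} = r_L$, and realized reliability also equals $r_L$.

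For the word ``converges'', I would add a short dynamic remark matching the myopic best-response process of Section~6.3. Starting from any $\lambda_0$, every revising provider drops investment, so $\lambda_t$ is nonincreasing and tends to $0$. Realized reliability $\lambda_t r_H + (1-\lambda_t) r_L$ therefore converges to $r_L$.
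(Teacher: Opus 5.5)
Your proof is correct and follows the paper's own argument: a provider of negligible mass cannot move $\rho_{\text{pool}}$ and, by Theorem~1, cannot signal its type, so it receives $B(\rho_{\text{pool}})$ whether or not it invests, and paying $\kappa>0$ is strictly dominated. Your finite-population caveat and your dynamic reading of ``converges'' go beyond the paper, which simply assumes negligible mass; note also that if investment is unobserved, callers' beliefs are fixed by the equilibrium strategies, so a unilateral deviation would not move the price even with finitely many providers.
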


\begin{proof}
A single provider is of negligible mass and, by Theorem~1, cannot signal its type, so its perceived
reliability is $\rho_{\text{pool}}$ whether or not it invests. Its gross benefit is thus
$B(\rho_{\text{pool}})$ either way, while investing additionally costs $\kappa > 0$. Not investing strictly
dominates, so no one invests and the realized population is all $L$.
\end{proof}

This is the unraveling: because reliability cannot be demonstrated, paying to be reliable is strictly
irrational, and the network decays to its weakest type. Corollary~1.1 drives the collapse through the
\emph{investment} margin. The same force also operates on a \emph{participation} margin---high types exiting
the market rather than accept a pooled price---which we make explicit so that no exit channel is left
unmodelled.

\begin{proposition}[Participation and unraveling]
Suppose each type $\theta$ has an outside option worth $u_\theta$, with $u_L \le u_H$, and participates only
if its market payoff is at least $u_\theta$. Under faith-based advertising: (i) if
$B(\rho_{\text{pool}}) \ge u_H$, full participation is sustained and the market pools at $\rho_{\text{pool}}$;
(ii) if $B(\rho_{\text{pool}}) < u_H$, the high type exits, the surviving market contains only low types and
is perceived at $r_L$, and it trades iff $B(r_L) \ge u_L$.
\end{proposition}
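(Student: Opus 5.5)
The plan is to reduce the participation question to Theorem~1. Fix the set of participating types $S \subseteq \{L,H\}$. Within the market, the argument of Theorem~1 applies verbatim to whatever types participate: messages are costless (A3) and beliefs attach to messages, so every participating provider is perceived at the same posterior. By Bayes' rule this posterior is the conditional mean reliability of the participating population. If $S=\{L,H\}$ it is $\rho_{\text{pool}}$; if $S=\{L\}$ it is $r_L$; if $S=\{H\}$ it is $r_H$. A participation equilibrium is then a set $S$ that is self-confirming: each type in $S$ weakly prefers the market payoff $B(\rho_S)$ to its outside option, and each type outside $S$ strictly or weakly prefers to stay out.

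For case (i), suppose $B(\rho_{\text{pool}}) \ge u_H$. Since $u_L \le u_H$, we also have $B(\rho_{\text{pool}}) \ge u_L$, so $S=\{L,H\}$ is consistent. The pooled price is exactly what Theorem~1 delivers, and both types weakly prefer to participate.

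For case (ii), suppose $B(\rho_{\text{pool}}) < u_H$. Then full participation is not an equilibrium, because $H$ strictly gains by exiting. We must also rule out $S=\{H\}$ and any configuration in which $H$ participates alongside $L$:
\begin{itemize}[leftmargin=1.4em]
\item If $H$ participated alone, it would be perceived at $r_H$. The low type could then enter and, by costless mimicry (Theorem~1's deviation step), be perceived at that posterior. So whenever $L$ participates, the posterior is dragged back to $\rho_{\text{pool}}$.
\item The argument therefore rests on $L$ always weakly wanting to enter when $H$ is present. This holds because $L$'s payoff from entering would be at least $B(\rho_{\text{pool}})$, and we need this to exceed $u_L$.
\end{itemize}

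The key step, and the main obstacle, is the case $S=\{H\}$. Under an exit story with sequential adjustment I will argue as follows: in any equilibrium where $H$ participates, $L$'s option of participating yields at least $B(\rho_{\text{pool}}) > B(r_L)$. I must show that $L$ does in fact enter, which requires $B(\rho_{\text{pool}}) \ge u_L$. If instead $u_L > B(\rho_{\text{pool}})$, then $L$ stays out and an $H$-only market could survive. I would handle this either by noting that in that subcase $L$ also fails $B(r_L)\ge u_L$, consistent with the ``trades iff'' clause, or by adding the mild reading that the proposition describes the unraveling path starting from full participation: once $H$ exits, only $L$ remains.

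With only $L$ remaining, Bayes' rule gives posterior $r_L$, and $L$ participates iff $B(r_L) \ge u_L$, which closes (ii). I expect to make the ``starting from full participation'' dynamic explicit, as the paper's simulation and Corollary~1.1 already do, so that the multiplicity with an $H$-only market does not undermine the stated claim.
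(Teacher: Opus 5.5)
Your reduction to Theorem~1 (each participating set pools at its own mean), your case~(i), and your fallback (b) for case~(ii) are the paper's proof. The paper checks the two participation constraints at full participation, lets the high type exit, and prices the remaining all-$L$ market at $r_L$.

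The step you single out as the main obstacle is mishandled, though. You lower-bound an entering low type's payoff by $B(\rho_{\text{pool}})$ and conclude that an $H$-only market ``could survive'' when $u_L > B(\rho_{\text{pool}})$. But $B(\rho_{\text{pool}})$ is what low types get if they all enter together, and that is not the relevant deviation. Providers have negligible mass, the same fact the paper uses for Corollary~1.1. As your own first bullet notes, a single $L$ entering an $H$-only market sends $H$'s costless message, is perceived at $r_H$, and earns $B(r_H)$. Since $H$ participates only if $B(r_H)\ge u_H$, and $u_H\ge u_L$, this deviation pays at least $u_L$. So $S=\{H\}$ is not self-confirming, except in the knife edge $B(r_H)=u_H=u_L$. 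The hypothesis $u_L\le u_H$, which you used only in case~(i), is what removes your obstacle. Your fallback (a) would not have repaired it in any case: it speaks to whether the $L$-only market trades, not to whether $H$ exits, so a surviving $H$-only market would still contradict the statement.

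With that correction your equilibrium-characterization route is worth finishing, because it proves more than the paper's single exit-path argument. In case~(ii), suppose $H$ participates with positive mass. Then not all low types can be in, since otherwise the participants' common posterior is at most $\rho_{\text{pool}}$ and $H$'s constraint fails. So some $L$ stays out, which requires $B(\rho)\le u_L$ at the common posterior $\rho$. Together with $B(\rho)\ge u_H\ge u_L$, this forces $u_L=u_H$. Hence whenever $u_L<u_H$, the high type is absent from every participation equilibrium, including partial-participation ones, and the rest of (ii) follows as you describe. The only exceptions are knife edges with $u_L=u_H$, e.g.\ $H$ plus just enough $L$ to make $B(\rho)=u_H$. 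The paper's proof silently sets these aside by reading (ii) as the unraveling path from full participation.
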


\begin{proof}
By Theorem~1 any set of participants pools at the mean reliability of that set, and each participant earns
the corresponding trust price. (i) If both types participate the perceived value is $\rho_{\text{pool}}$ and
each earns $B(\rho_{\text{pool}})$; the high type's participation constraint $B(\rho_{\text{pool}}) \ge u_H$
holds by assumption, and the low type's holds a fortiori since $u_L \le u_H$ and $\rho_{\text{pool}} \ge
r_L$. (ii) If $B(\rho_{\text{pool}}) < u_H$ the high type's participation constraint fails under full
participation, so it exits; with only low types remaining the pooled mean is $r_L$, giving each survivor
$B(r_L)$, and the low type stays iff $B(r_L) \ge u_L$.
\end{proof}

With two types the exit is a single step; with a continuum of qualities it cascades, since each marginal
exit lowers the participant mean and can trigger the next, so the market can unravel all the way to the
lowest type---Akerlof's original result. Either way, the high-quality end of an unverified agent market is
driven out, whether by withdrawing effort (Corollary~1.1) or by withdrawing entirely (Proposition~1.2).

\subsection{Screening restores separation}

\begin{theorem}
Add the single-crossing screening technology (A4) with a binary credential. A separating PBE exists if and
only if
\[
c_H \le g \qquad\text{and}\qquad c \;>\; g - c_H ,
\]
and whenever it exists it is unique in form: the true-$H$ providers obtain the credential and are perceived
as $H$ ($\rho = r_H$), while $L$ providers do not and are perceived as $L$ ($\rho = r_L$). In the normalized
case $c_H = 0$ the condition reduces to $c > g$: separation is sustainable exactly when the cost of
sustaining an overclaim exceeds the gain from it.
\end{theorem}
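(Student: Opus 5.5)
Since the credential is binary, I will treat it as the only payoff-relevant signal: under (A3) any accompanying cheap-talk message carries no information, by the argument of Theorem~1. So in any separating PBE exactly one type holds the credential, and the caller's posterior is $r_H$ after seeing it and $r_L$ without it. The proof then has three steps.

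\emph{Step 1: pin down the form.} Suppose, for contradiction, a separating profile in which $L$ obtains the credential and $H$ does not. Then $L$ would pay $c_H + c > 0$ to be perceived as $H$, while $H$ pays nothing and is perceived as $L$. Comparing payoffs, $H$ gains by deviating to the credential (it earns $B_H - c_H$ rather than $B_L$) whenever $L$'s choice to buy was itself rational. More directly, $L$'s incentive constraint $B_L \ge B_H - c_H - c$... I will compare the two sets of constraints. $L$ buying requires $B(\text{credential posterior}) - c_H - c \ge B(\text{no-credential posterior})$. $H$ abstaining requires the reverse inequality with the smaller cost $c_H$. Both cannot hold, because $c > 0$; this is exactly where single-crossing (A4) is used. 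So the form is forced: $H$ buys, $L$ does not. Bayes' rule on the equilibrium path then gives posteriors $r_H$ and $r_L$. I take the credential to be the only screening signal, so uniqueness holds up to relabelling cheap-talk messages.

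\emph{Step 2: the incentive constraints are necessary and sufficient.} In the forced profile:
\begin{itemize}
\item $H$ must weakly prefer buying: $B_H - c_H \ge B_L$, that is, $c_H \le g$.
\item $L$ must strictly, or weakly, prefer abstaining: $B_L \ge B_H - c_H - c$, that is, $c \ge g - c_H$.
\end{itemize}
Participation is taken as given, as in Theorem~2's setting. For sufficiency, all on-path beliefs come from Bayes' rule. With a binary credential the only deviations are buying or not buying, and both are on path, so no off-path beliefs are needed and the profile is a PBE.

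\emph{Step 3: the main obstacle is the boundary case.} Step 2 yields a weak inequality for $L$, while the theorem states the strict $c > g - c_H$. At equality, $L$ is indifferent, and a mixed or tie-broken profile pools partially. I would close this gap in one of two ways. The first is to adopt the convention that indifferent types take the cheaper action, or equivalently to require strict incentive compatibility so the equilibrium is robust. The second is to show that at $c = g - c_H$ the separating profile is not robust to small perturbations of $c$, matching the knife-edge seen in Figure~\ref{fig:threshold}. I would state the chosen tie-breaking convention explicitly, so that the ``if and only if'' is exact. The normalized case then follows by setting $c_H = 0$: the condition $c_H \le g$ holds trivially since $g > 0$, and the remaining condition becomes $c > g$.
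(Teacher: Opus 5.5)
Your skeleton is the paper's. Cheap talk cannot separate (Theorem~1's argument), so separation rides on the credential; single-crossing rules out the reversed assignment; IC-H and IC-L are then necessary and sufficient; and you set $c_H=0$. Your Step~1 comparison leaves the posteriors unspecified: writing $\Delta$ for the trust-price gap between holding and lacking the credential, $L$ buying needs $\Delta\ge c_H+c$ and $H$ abstaining needs $\Delta\le c_H$, which is impossible since $c>0$. This is cleaner than the paper's version, which perceives the holder as $r_H$ even when the holder is $L$, and that is inconsistent with Bayes' rule. Drop your own opening description of $L$ as ``perceived as $H$'' for the same reason. Your observation that no off-path beliefs are needed is correct. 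You are also right that the boundary is a genuine issue. Under the usual weak-best-response definition of PBE, the separating profile is an equilibrium at $c=g-c_H$, and the paper handles this only with the parenthetical ``strict for strict incentives.''

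The problem is your repair. Breaking ties toward the cheaper action is not equivalent to strict incentive compatibility, and it reverses both boundaries of the statement. An indifferent $L$ then stays honest, so separation survives at $c=g-c_H$ and you get $c\ge g-c_H$. An indifferent $H$ at $c_H=g$ then declines the credential, so you get $c_H<g$. Uniform strict incentive compatibility fixes the $L$ boundary, but it also excludes $c_H=g$. The statement admits that case: there the condition is just $c>0$, which holds automatically under (A4).

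The convention that reproduces the stated conditions exactly is the opposite tie-break: an indifferent provider obtains the credential. Equivalently, require IC-L strictly and IC-H only weakly, which is the asymmetric convention the paper uses implicitly. Your second option also works if you adopt it as the definition. Requiring the separating profile to remain a PBE for all $c'$ near $c$ yields exactly $c_H\le g$ and $c>g-c_H$, because IC-H does not involve $c$.
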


\begin{proof}
Consider the candidate profile: type $H$ obtains the credential and is perceived as $r_H$; type $L$ does not
and is perceived as $r_L$; off the path, any provider lacking the credential is believed to be $L$. We check
sequential rationality.

\emph{Type $L$ (incentive not to mimic).} Honesty yields $B_L$ (perceived $L$, no screening cost). Mimicking
---obtaining the credential---yields $B_H - (c_H + c)$. Type $L$ prefers honesty iff
\[
B_L \;\ge\; B_H - c_H - c \;\Longleftrightarrow\; c \;\ge\; (B_H - B_L) - c_H \;=\; g - c_H,
\]
with strict preference iff $c > g - c_H$. \textbf{(IC-L)}

\emph{Type $H$ (incentive to screen).} Screening yields $B_H - c_H$; abstaining yields $B_L$ (perceived
$L$). Type $H$ screens iff
\[
B_H - c_H \;\ge\; B_L \;\Longleftrightarrow\; c_H \;\le\; g. \quad \textbf{(IC-H)}
\]

\emph{Caller.} On the path, the credential is held only by $H$, so the posterior $r_H$ given a credential and
$r_L$ otherwise is Bayes-consistent; paying a trust price increasing in the posterior is optimal by (A5).
Both incentive constraints hold precisely when $c_H \le g$ and $c > g - c_H$, so the profile is a PBE. This
proves sufficiency.

\emph{Necessity and uniqueness.} Consider any separating PBE. Cheap-talk messages cannot separate types (the
argument of Theorem~1), so separation must ride on the credential---the only verifiable instrument---and
exactly one type holds it, the holder perceived $r_H$ and the non-holder $r_L$. The holder must be the high
type: if instead $L$ held the credential and $H$ did not, then $H$---perceived $r_L$---could acquire it at
cost $c_H$ to be perceived $r_H$, a change of $g - c_H$; moreover for $L$ to hold it at all requires
$B_H - c_H - c \ge B_L$, i.e.\ $c \le g - c_H$, which combined with $c>0$ forces $c_H < g$ and hence
$g - c_H > 0$, so $H$'s deviation is strictly profitable and breaks the candidate. Thus any separating PBE
has $H$ holding and $L$ not. Given that profile, $L$'s incentive not to mimic requires $B_L \ge B_H - (c_H +
c)$, i.e.\ $c \ge g - c_H$ (strict for strict incentives), and $H$'s incentive to hold rather than pool down
requires $B_H - c_H \ge B_L$, i.e.\ $c_H \le g$. Both conditions are therefore necessary for \emph{any}
separating PBE, not merely for the constructed one, and they are sufficient by the construction above. Hence
they characterize separation as such, and the separating profile is unique. Setting $c_H = 0$ gives $c > g$.
\end{proof}

\begin{corollary}[Investment restored]
With screening and separation, in the prior investment stage a provider invests to become $H$ iff
$\kappa \le g - c_H$. For investment cheap relative to the recognition premium, providers invest and
equilibrium reliability rises toward $r_H$.
\end{corollary}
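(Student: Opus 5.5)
The plan is to treat the investment stage as a one-shot decision by a single provider of negligible mass, taking the continuation from Theorem~2 as given. I will mirror the structure of the proof of Corollary~1.1 and change only the continuation payoffs.

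\emph{Step 1: continuation payoffs.} Fix parameters with $c_H \le g$ and $c > g - c_H$, so that by Theorem~2 the separating PBE exists and is unique in form. In that equilibrium a provider who ends up type $H$ obtains the credential and is perceived at $r_H$, for a net payoff of $B_H - c_H$. A provider who ends up type $L$ does not screen and is perceived at $r_L$, for a payoff of $B_L$. I will emphasize one point here. Because a single provider has negligible mass, its investment choice does not move the population share $\lambda$. It also does not move the caller's beliefs, since in the separating equilibrium beliefs are pinned to the credential and not to $\lambda$. So both continuation payoffs are independent of anyone's investment decision. This is exactly the feature that fails in the faith-based case, where the payoff is $B(\rho_{\text{pool}})$ regardless of type.

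\emph{Step 2: compare.} Investing yields $B_H - c_H - \kappa$, and not investing yields $B_L$. The provider invests iff
\[
B_H - c_H - \kappa \;\ge\; B_L \;\Longleftrightarrow\; \kappa \;\le\; g - c_H ,
\]
using $g = B_H - B_L$. At equality I adopt the tie-breaking convention that the provider invests, which matches the weak inequality in the statement. When $\kappa_i$ is idiosyncratic with distribution $F$, the equilibrium fraction of high types is therefore $\lambda^\ast = F(g - c_H) = P(\kappa < g - c_H)$, up to ties. Realized reliability is then $\lambda^\ast r_H + (1-\lambda^\ast) r_L$, which increases toward $r_H$ as investment becomes cheap relative to $g - c_H$. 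That is the sense of the second sentence of the statement.

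\emph{Step 3 (main obstacle): consistency of the continuation.} The delicate step is checking that the continuation equilibrium is still valid once $\lambda$ is endogenous. In particular, we need $\lambda^\ast \in (0,1)$, or at least that a separating equilibrium remains well defined when $\lambda^\ast$ is $0$ or $1$. Theorem~2's conditions involve only $c$, $c_H$, and $g = B(r_H) - B(r_L)$, none of which depends on $\lambda$. So IC-L and IC-H continue to hold for every realized $\lambda$, and the caller's Bayes-consistent beliefs conditioned on the credential are unaffected.

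I will handle the degenerate cases separately. When no one invests ($\lambda^\ast = 0$), or everyone does ($\lambda^\ast = 1$), the credential or its absence becomes an off-path signal. For these cases I will specify the off-path beliefs used in Theorem~2: absence of the credential is believed to be $L$, and a credential is believed to be $H$, which is justified by single-crossing via the Intuitive Criterion. These beliefs keep the same payoffs, so the fixed point of Step~2 is self-consistent.
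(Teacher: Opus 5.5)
Your proposal is correct and follows the paper's proof: both take the separating continuation of Theorem~2 as given and reduce the investment decision to comparing $B_H - c_H - \kappa$ with $B_L$, which gives $\kappa \le g - c_H$. Your Step~3 adds rigor the paper leaves implicit rather than changing the route: Theorem~2's conditions involve only $c$, $c_H$ and $g$, so the continuation stays valid once $\lambda$ is endogenous, and you cover the degenerate cases $\lambda^\ast \in \{0,1\}$ with off-path beliefs.
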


\begin{proof}
Investing yields the separating high payoff net of screening, $B_H - c_H$; not investing yields $B_L$.
Investing is optimal iff $B_H - c_H - \kappa \ge B_L$, i.e.\ $\kappa \le g - c_H$.
\end{proof}

Theorems~1 and~2 together are the formal version of the paper's thesis: the same network that collapses to
lemons under faith-based advertising (Theorem~1, Corollary~1.1) separates and rewards genuine reliability
once screening makes overclaiming cost more than it pays (Theorem~2, Corollary~2.1).

\subsection{Remarks}

\paragraph{A.5.1 (Equilibrium selection.)}
The screening game also admits pooling PBE sustained by pessimistic off-path beliefs. When $c > g - c_H$
holds, the separating equilibrium is the unique outcome surviving the Cho--Kreps Intuitive Criterion: from
any pooling candidate, a true $H$ has a deviation---obtain the credential---that is profitable for $H$ but
not for $L$, so the caller should attribute it to $H$, which unravels the pool. This is the standard
Spence/Riley selection \cite{cho1987,riley1979}; we invoke it to justify focusing on separation.

\paragraph{A.5.2 (Reputation as the source of $c$.)}
In a one-shot game, $c$ is the direct screening-plus-penalty cost of obtaining the credential. In the
repeated setting of Section~4.4, being caught confident-wrong forfeits future recognition: with discount
factor $\delta$, per-period premium $g$, and per-period detection probability $p$, the present value an
overclaimer loses upon eventual detection is on the order of $\tfrac{\delta}{1-\delta}\, p\, g$. Reputation
therefore \emph{adds} to the effective $c$, and a sufficiently patient market (high $\delta$) with
non-trivial detection ($p>0$) pushes $c$ above $g$ endogenously. This connects the static condition $c > g$
to the repeated-game mechanism and explains why reputation and screening are complements, not alternatives.

\paragraph{A.5.3 (Continuum of types.)}
Replacing $\{L,H\}$ with a continuum $r \in [r_{\min}, r_{\max}]$ and retaining single-crossing yields the
least-cost (Riley) separating equilibrium, in which the credential intensity rises monotonically with true
reliability and the same incentive logic binds locally. We use two types only for transparency.

\paragraph{A.5.4 (Assumptions and scope.)}
The results require: a screening technology with single-crossing (A4)---i.e., verifiable task instances on
which a true $H$ succeeds more cheaply than an $L$; this fails where capabilities are non-reproducible
(Section~6.5). Reputation's contribution to $c$ (A.5.2) requires persistent identities (Sections~2 and~6.5).
And (A1, A5) assume risk-neutral providers facing a trust price monotone in perceived reliability. Where
these hold, separation is achievable; where they fail---unverifiable tasks, disposable identities---$c$
cannot be raised above $g$ and the lemons problem may be irreducible at the protocol layer, as Section~4.5
notes.

\subsection{The reliability-composition bound (Section 4.6)}

For a delegation chain of depth $n$ with per-hop success probabilities $r_1, \dots, r_n$ and independent
failures, end-to-end success is $\prod_{k=1}^{n} r_k$. If the layer adds a verification step at hop $k$ that
catches a fraction $v_k \in [0,1]$ of that hop's errors, the hop's effective failure probability falls to
$(1 - r_k)(1 - v_k)$, so its effective success probability is
\[
r_k' \;=\; 1 - (1 - r_k)(1 - v_k) \;\ge\; r_k,
\]
and end-to-end success is $\prod_k r_k' \ge \prod_k r_k$: verification helps more as the chain deepens.
Independence is an idealization (Section~4.6). Without it, a conservative planning bound follows from Boole's
inequality---end-to-end failure probability $\le \sum_k (1 - r_k')$---which the layer can compute from
per-hop descriptors without assuming independence. Either way this is a bound and a planning tool, not a
correctness guarantee: by the end-to-end argument, semantic correctness remains with the endpoints.

\end{document}